\documentclass[preprint]{article}

\PassOptionsToPackage{numbers, sort&compress}{natbib}
\usepackage{neurips_2026}
\usepackage[utf8]{inputenc}
\usepackage[T1]{fontenc}
\usepackage[colorlinks=true, allcolors=blue]{hyperref}
\usepackage{graphicx}
\usepackage{url}
\usepackage{booktabs}
\usepackage{amsfonts}
\usepackage{microtype}
\usepackage{xcolor}
\usepackage{amsmath}
\usepackage{amsthm}
\usepackage{enumitem}
\setlist[itemize]{label=$*$, leftmargin=*}
\usepackage{nicefrac}
\usepackage{pifont}

\newcommand{\pp}{\textsc{\small {PsychoPass}}~}

\newcommand{\lex}{\ensuremath{\phi_{L}}}
\newcommand{\sem}{\ensuremath{\phi_{S}}}
\newcommand{\Traj}{\ensuremath{\mathcal{T}}}
\newcommand{\TrajU}{\ensuremath{\mathcal{T}_{U}}}
\newcommand{\TrajA}{\ensuremath{\mathcal{T}_{A}}}
\newcommand{\TrajL}{\ensuremath{\mathcal{T}^{(-k)}}}
\newcommand{\TrajF}[1]{\ensuremath{\mathcal{T}^{(:#1)}}}

\newcommand{\Texec}{\ensuremath{T_{\operatorname{exec}}}}
\newcommand{\Tmax}{\ensuremath{T_{\max}}}
\newcommand{\Lpath}{\ensuremath{L}}
\newcommand{\Disp}{\ensuremath{D}}
\newcommand{\directness}{\ensuremath{\eta}}
\newcommand{\meanSpeed}{\ensuremath{\bar{s}}}
\newcommand{\stdSpeed}{\ensuremath{\sigma_s}}
\newcommand{\Vel}{\ensuremath{V}}
\newcommand{\Circ}{\ensuremath{\zeta}}
\newcommand{\OTpos}{\ensuremath{\mathrm{OT}^{+}}}
\newcommand{\OTneg}{\ensuremath{\mathrm{OT}^{-}}}
\newcommand{\TRA}{\ensuremath{\mathrm{TRA}}}
\newcommand{\StretchHigh}{\ensuremath{\mathrm{Str^{\uparrow}}}}
\newcommand{\StretchDec}{\ensuremath{\mathrm{Str^{\downarrow}}}}
\newcommand{\HighFluc}{\ensuremath{\mathrm{HF}}}

\newcommand{\cmark}{\ding{51}}
\newcommand{\xmark}{\ding{55}}

\theoremstyle{plain}
\newtheorem{definition}{Definition}
\newtheorem{assumption}{Assumption}

\newtheorem{theorem}{Theorem}
\newtheorem{proposition}[theorem]{Proposition}

\title{\textsc{PsychoPass}: Geometric Profiling of Multi-Turn Adversarial LLM Conversations}

\author{
    Muberra Ozmen \\
    Coveo \\
    Montreal, QC, Canada \\
    \texttt{mozmen@coveo.com} \\
    \And
    Subhabrata Majumdar \\
    Indian Institute of Management Bangalore \\
    Bengaluru, India \\
    \texttt{smajumdar@iimb.ac.in} \\
}

\begin{document}

\maketitle

\begin{abstract}
    Multi-turn jailbreak attacks on large language models (LLMs) reveal a mismatch in current guardrails: they operate on individual turns, while attacks unfold as trajectories across conversations. 
    We propose a shift from content to dynamics, modeling conversations as paths in representation space and asking whether adversarial intent is encoded early in their geometry. 
    We introduce \textsc{\small PsychoPass}, a framework that extracts geometric features from conversation trajectories in embedding space to predict a potential attack \emph{before} harmful content is produced. 
    These features achieve near-perfect performance in na\"ive classifiers, which is largely explained by the inclusion of number of turns as a feature. 
    After removing this confound, a smaller but consistent geometric signal remains, with classification performance that does not depend meaningfully on encoder choice. 
    Crucially, this signal appears early in the conversation: attack outcomes remain above chance from short prefixes alone, more reliably than baseline guardrails.  
    A supporting theoretical analysis explains these findings via a decomposition of length and shape, a detection bound based on prefix length, and encoder invariance. 
    Together, these results show that adversarial conversations leave an early, representation-robust geometric fingerprint suitable for online monitoring \footnote{Link to repository: \hyperlink{https://github.com/muberraozmen/psycho-pass}{https://github.com/muberraozmen/psycho-pass}}.
\end{abstract}

\section{Introduction}
Large Language Models (LLMs) deployed in conversational interfaces are routinely targeted by multi-turn jailbreaks that escalate, backtrack, and reframe a forbidden request across several turns until the target system complies~\citep{russinovich2025greatwritearticlethat, mehrotra2024tap}.
Most runtime defenses aimed at preventing such attacks operate turn-by-turn: each user message is scored independently for policy violation, and each model response is filtered after generation. 
This design is fundamentally reactive, in that it can flag harmful \emph{content} once it appears but cannot flag the \emph{process} by which an attacker steers a conversation toward harm before that content is produced.

This paper asks whether the geometry of the trajectory of a conversation in a representational space can serve as an early warning signal for adversarial intent. 
Our premise is that if a multi-turn attack must systematically move a conversation toward a forbidden region, that directed movement should leave a geometric fingerprint, such as through curvature, drift, circularity, or temporal asymmetry of the conversation trajectory that distinguishes it from benign interaction before any single turn looks dangerous on its own (Figure~\ref{fig:pipeline}).

\begin{figure}[h]\label{fig:pipeline}
    \centering
    \includegraphics[width=\linewidth]{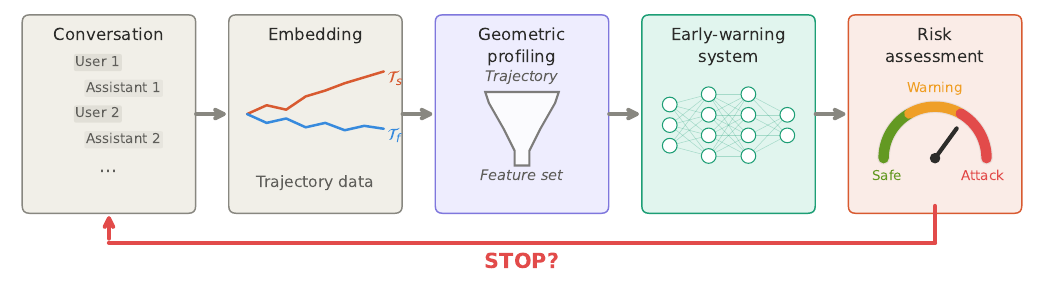}
    \caption{
        The \pp pipeline. 
        We distill down the difference between trajectories of successful ($\mathcal{T}_s$) and failed ($\mathcal{T}_f$) attacks in the embedding space using geometric features, which are used to train an early-warning system to provide online risk assessment to detect attacks in progress.
    }
\end{figure}

To test our proposition, we embed multi-turn adversarial conversations via a sparse lexical encoder and a dense semantic encoder, extract trajectory-level shape statistics and temporal features, and run a sequence of four experiments to progressively isolate the geometric signal from confounds. 
The first experiment shows that na\"ive trajectory classification achieves near-perfect AUROC, but this is almost entirely a length artifact: failed attacks exhaust their turn budget while successful ones terminate early, and any feature that scales with conversation length inherits this free signal. 
The second experiment removes the length confound by equalizing all trajectories to six turns. 
AUROC drops but remains reliably above chance, carried by stretch-decreasing dynamics, outlier timing, and low-frequency fluctuation; the choice of encoder has no statistically meaningful effect on classification. 
The third experiment asks whether this residual signal is available early enough to matter, and finds that it is: attack outcome stays above chance from prefixes as short as two turns, while a content-based safety classifier applied to the same prefixes collapses to chance.
We complement the empirical study with a theoretical analysis. 
A decomposition of discriminability into length and shape components quantitatively predicts the performance drop between the first and second experiments, a finite-sample detection bound explains why short prefixes suffice in the third experiment, and a rank-preservation result explains the encoder invariance in the second experiment.

\paragraph{Related work.}
How LLMs behave in the wild over long conversations spanning multiple turns is only beginning to be understood. 
Recently, \citet{gooding2025interactiondynamicsrewardsignal} showed that geometric features of conversation trajectories can serve as reward signals for alignment. 
Other studies have found that conversational history geometrically constrains future responses \cite{simhi2026oldhabitsdiehard}, model performance degrades as conversations grow longer \cite{laban2025llmslostmultiturnconversation}, and non-reasoning LLMs show improved performance when an input prompt is repeated inside the same conversation \cite{leviathan2025promptrepetitionimprovesnonreasoning}. 
These findings suggest that there are structural nuances in multi-turn interactions that single-turn analysis misses entirely.

Research on adversarial attacks on LLMs has followed a similar single-turn-to-multi-turn trajectory. 
Early jailbreak and prompt injection attacks relied on pre-crafted, static prefixes or suffixes \citep{zou2023universaltransferableadversarialattacks,liu2024autodangeneratingstealthyjailbreak}.
As guardrails and safety training became adept at stopping such attacks, researchers discovered that new strategies like Crescendo~\citep{russinovich2025greatwritearticlethat} and TAP~\citep{mehrotra2024tap} can spread and escalate an attack across turns to effectively circumvent perimeter defenses and make an LLM achieve a harmful objective. 
Several open-source libraries implement single- and multi-turn attacks. 
Two such well-known tools are garak~\cite{derczynski2024garakframeworksecurityprobing} and PyRIT~\citep{munoz2024pyritframeworksecurityrisk}, which standardize multi-turn attack pipelines using harmful objectives from safety benchmarks like AdvBench and HarmBench \citep{mazeika2024harmbenchstandardizedevaluationframework}.

Runtime defense methods to prevent attacks focus on filtering undesired inputs and outputs. 
Past research directions on such methods include encoder-only classifiers \cite{prompt-guard-86m,vijil-mbert}, moderation LLMs \cite{llama-guard-4-12b,granite-guardian-41}, embedding-based classifiers~\cite{ayub2024embeddingbasedclassifiersdetectprompt}, and inference-time methods~\cite{fath,spotlighting}. 
While tools like Nova rules \cite{Nova} can combine them to operationalize a guardrail that achieves acceptable latency-performance tradeoff in deployed AI systems, these methods often generalize poorly to unseen data. 
For example, \citet{hackett2025bypassingllmguardrailsempirical} showed that the best open-source prompt injection classifiers remain vulnerable to evasion techniques that perturb a malicious input prompt to obfuscate its content.

Existing runtime defenses work by filtering harmful content, not by discouraging malicious intent. 
This leaves the LLM underneath vulnerable to an adaptable attacker---often another LLM---to try alternative ways of achieving the same harmful goal if an initial attack is foiled. 
\pp aims to mitigate this risk, by using information inherent in the shape of an adversarial trajectory to enable real-time intervention of attacks unfolding over multiple turns.

\section{Methodology}\label{sec:methodology}
The \pp pipeline follows three stages: \emph{Attack Generation}, which produces adversarial multi-turn conversations against target LLMs; \emph{Trajectory Construction}, which embeds each conversation as a path followed in $d$-dimension space; and \emph{Geometric Profiling}, which extracts statistical features from those paths. 

\paragraph{Attack generation}
We use PyRIT~\citep{munoz2024pyritframeworksecurityrisk} to orchestrate Crescendo~\citep{russinovich2025greatwritearticlethat} attacks, which utilizes a multi-turn escalation strategy with dynamic backtracking, against four target LLMs (see Appendix~\ref{app:crescendo} for details). 

Each attack is initialized with a seed objective drawn from one of the benchmarks bundled with PyRIT: AdvBench~\citep{zou2023universaltransferableadversarialattacks}, HarmBench~\citep{mazeika2024harmbenchstandardizedevaluationframework}, and AIRT~\citep{munoz2024pyritframeworksecurityrisk}. 
In total we draw seed from these three families, sampling two attacks per objective. 
Each attack is limited to a maximum of $\Tmax=8$ turns and $2$ backtracks, where one turn is defined as one user message followed by one assistant message.

Three LLMs are involved in the attack generation process. 
The \emph{adversarial bot} (\texttt{llama-3.1-8b-instruct}, temperature $t = 1.0$) acts as a proxy user that iteratively refines prompts to satisfy the seed objective. 
The \emph{objective bot} is the system under evaluation; we experiment with four target LLMs (\texttt{llama-3.1-8b-instruct}, \texttt{gpt-oss-120b}, \texttt{mistral-small-3.2-24b-instruct}, \texttt{qwen-2.5-7b-instruct}), all at $t=1.0$. 
The \emph{scoring bot} (\texttt{gpt-oss-120b}, $t=0.1$) is an automated judge that declares an attack a success when its score exceeds a threshold of $0.8$. 
All bots are configured with a top-$p$ value of $1.0$ and are restricted to $1{,}024$ completion tokens per turn.

\paragraph{Trajectory construction}
A conversation $C = \{m_i\}_{i=1}^{N}$ of $N = 2 \times \Texec$ messages (steps), where $\Texec$ denotes number of executed turns (with alternating user and assistant messages), is projected into a vector space via an encoder $\phi:\mathcal{M}\to\mathbb{R}^d$, yielding the \emph{conversation trajectory} $\Traj = \{v_i\}_{i=1}^{N}$ with $v_i = \phi(m_i)$. 
We use two encoders:
\begin{itemize}[leftmargin=*,nolistsep]
    \item \emph{Lexical} ($\lex$): TF-IDF with up to $d = 4{,}096$ features, English
    stop-word removal, and a max-document-frequency of $0.95$.
    \item \emph{Semantic} ($\sem$): the dense embedding model
    \texttt{qwen3-embedding-8b} with a context window of $32{,}000$ tokens which yields a dimensionality of $d = 4{,}096$.
\end{itemize}
To isolate role-specific signals and to support the predictive-horizon analysis, we additionally consider the \emph{user trajectory} $\TrajU$ (messages of adversarial bot), the \emph{assistant trajectory} $\TrajA$ (messages of objective bot), and the \emph{trimmed trajectory} obtained either by keeping the first $k$ turns $\TrajF{k} = \Traj_{1:2k}$, or by removing the last $k$ turns $\TrajL = \Traj_{1:(N-2k)}$.

\paragraph{Geometric profiling}
Seven $L_2$-norm based statistics (Table~\ref{tab:features} top half) are used to summarize the geometry of trajectory $\Traj$.
\emph{Path length} ($\Lpath$) measures total conversational effort; \emph{Displacement} ($\Disp$) the net drift; \emph{directness} ($\directness$) whether progression was goal-oriented or meandering; the \emph{Step pace} pair ($(\meanSpeed,\stdSpeed)$) the average speed and its variability; \emph{Velocity} ($\Vel$) the directional progress per step; \emph{Circularity} ($\Circ$) whether the trajectory loops back through previously visited regions of the embedding space. 

Catch22~\citep{lubba2019catch22canonicaltimeseriescharacteristics} is a well-known method for extracting temporal features from time-series data. 
We adapt this method to capture higher-order temporal structure in our attack trajectories. 
We apply a targeted subset of the Catch22 feature set to the trajectory $\Traj$ (Table~\ref{tab:features} bottom half). 
Since the embedding dimension $d$ is large, computation is restricted to the top-$K{=}100$ dimensions with highest variance.
Each feature is computed independently per dimension and then aggregated across dimensions by mean, maximum, minimum and standard deviation. 
\emph{Time-reversal asymmetry} ($\TRA$) is sensitive to the temporal arrow of a series; a nonzero value indicates statistical irreversibility. 
\emph{Outlier timing} ($\OTpos$, $\OTneg$) measures the median normalized index of above- and below-mean outliers across an increasing sequence of thresholds; values near zero or one mean outliers concentrate early or late. 
\emph{Stretch-high} ($\StretchHigh$) is the longest run of consecutive above-mean values and captures the largest sustained excursion into a particular region of the embedding space. 
\emph{Stretch-decreasing} ($\StretchDec$) is the longest run of non-positive first differences, i.e., the most sustained monotonic withdrawal in a latent dimension. 
\emph{High fluctuation} ($\HighFluc$) is the trace of the covariance of a $3{\times}3$ transition matrix on the autocorrelation-down-sampled series, with larger values indicating more persistent low-frequency fluctuation in the bin sequence.

\begin{table}[t]\label{tab:features}
    \centering
    \caption{
        Geometric features extracted from a trajectory $\Traj = \{v_i\}_{i=1}^{N}\subset\mathbb{R}^d$. 
        $\Delta d_i = \|v_{i+1} - v_i\|_2$ denotes the $i$-th step length, $\mu = N^{-1} \sum_i v_i$ the centroid, $R_i = \|v_i - \mu\|_2$ the radius from $\mu$, and $\epsilon > 0$ a small constant for numerical stability.
        $x = (x_1, \dots, x_N)$ denotes the $z$-normalized values of a single coordinate of $\Traj$, $\bar{x}$ its mean, $\Delta x_t = x_{t+1} - x_t$ the first differences, and $\rho_x(k)$ its lag-$k$ autocorrelation. 
        $\Theta = \{0.01, 0.02, \dots, \max_t|x_t|\}$ enumerates positive standardized thresholds; $\tau = \min\{k\geq 1 : \rho_x(k) \leq 0\}$ is the first zero-crossing of the autocorrelation; $x_{\downarrow\tau}$ is the resulting down-sampled series; and $T\in[0,1]^{3\times 3}$ collects the empirical transition probabilities $T_{ij} = \Pr(s_{t+1}=j\mid s_t=i)$ over the equiprobable three-letter symbolization $s_t\in\{1,2,3\}$ of $x_{\downarrow\tau}$.
    } 
    \small
    \begin{tabular}{@{}llll@{}}
        \toprule
        \textbf{Type} & \textbf{Feature} & \textbf{Definition} & \textbf{Interpretation} \\
        \midrule
        & $\Lpath$ & $\sum_{i=1}^{N-1} \Delta d_i$ & Total conversational effort \\
        & $\Disp$ & $\|v_N - v_1\|_2$ & Net semantic drift \\
        $L_2$-norm & $\directness$ & $\Disp / (\Lpath + \epsilon)$ & Goal-directedness $\in (0,1]$ \\
        shape & $\meanSpeed$ & $\Lpath/(N{-}1)$ & Average speed \\
        statistics & $\stdSpeed$ & $\mathrm{std}(\Delta d)$ & Average variability \\
        & $\Vel$ & $\Disp / N$ & Directional progress per step \\ 
        & $\Circ$ & $\mathrm{std}(R) / (\bar{R} + \epsilon)$ & Tendency to revisit regions \\
        \midrule
        & $\TRA$ & $\frac{1}{N-1}\sum_{t}(\Delta x_t)^3$ & Statistical irreversibility \\
        Multivariate & $\OTpos,\; \OTneg$ & Median normalized index of $\pm$ outliers & Outlier concentration early/late \\
        temporal & $\StretchHigh$ & Longest run of consecutive $x_t > \bar{x}$ & Sustained excursion above mean \\
        (Catch22) & $\StretchDec$ & Longest run of $\Delta x_t \leq 0$ & Longest monotonic withdrawal \\
        & $\HighFluc$ & $\mathrm{tr}\,\mathrm{Cov}(T(x_{\downarrow\tau}))$ & Low-freq.\ fluctuation persistence \\
        \bottomrule
    \end{tabular}
\end{table}

When temporal features are computed on very short trajectory prefixes (e.g.\ $k{=}4$, which yields $N{=}4$ messages per trajectory), certain Catch22 statistics can be undefined: the autocorrelation sequence of a four-point series may have no zero-crossing, leaving the lag $\tau$ undefined and making the down-sampled series degenerate; likewise, the empirical transition matrix may be sparsely populated, causing missing feature values. 
Similarly, the $L_2$-norm circularity statistic $\Circ$ is explicitly missing for trajectories shorter than three steps.
These missing values are handled differently by the two classifiers.
For logistic regression, each feature is imputed with its per-feature median computed.
For gradient boosting, we use Histogram-based Gradient Boosting Classification Tree, which natively accommodates missing entries by treating them as a separate branch direction at each split, requiring no imputation.

\section{Experiments}\label{sec:experiments}
Using the features above, we design and execute a sequence of experiments that progressively isolate the geometric signal in adversarial trajectories. 
All experiments share the same pool of generated conversations: $7{,}525$ Crescendo attacks across the $4$ target LLMs and the $3$ seed benchmarks, with an overall attack-success rate of $69.6\%$ (see Appendix~\ref{app:attack_stats} for statistics of generated attacks). 
Experiments differ only in trajectory configuration and whether the number of executed turns in the conversation is exposed to the classifier. 
To evaluate the features as impact factors on the attack outcome (success or failure), we train two simple classifiers: \emph{logistic regression} (LR) with L1 regularization (strength $1.0$) as a linear classifier, and \emph{gradient boosting} (GB) (learning rate $0.1$, depth $3$ and L2 regularization $0.01$) as a tree-based ensemble method. 
Throughout, the labeled set of conversations is split $80/20$ into training and evaluation partitions. We report classification performance and \emph{surviving factors}, defined as those whose permutation importance exceeds the standard deviation of their own noise (see Appendix~\ref{app:metrics} for formulations). 
For each conversation, geometric features are extracted separately from the full conversation trajectory $\Traj$, the user-only trajectory $\TrajU$, and the assistant-only trajectory $\TrajA$, and all three feature blocks are concatenated into a single feature vector presented to the classifier. 

\subsection{Experiment 1: Conversation length confound}\label{sec:exp1}
We train both classifiers on the unfiltered set of $7{,}525$ conversation using full trajectories under two conditions: with and without number of executed turns $\Texec$ in the feature set. 
Regardless of encoder and classifier, both conditions yield near-identical and near-perfect classification, where AUROC remains at $0.991$. 
The demonstration of important factors in Figure~\ref{fig:exp1} makes the source of this invariance explicit. 
When $\Texec$ is in the feature set it is the dominant factor. 
When $\Texec$ is removed, path length $\Lpath$ takes over: it scales additively with the number of turns, so the dropped signal is reabsorbed almost exactly. 
The same pattern holds under both encoders, suggesting that the length confound is not a property of the embedding but of the geometry-to-length coupling.

\begin{figure}[t]\label{fig:exp1}
    \centering
    \includegraphics[width=\linewidth]{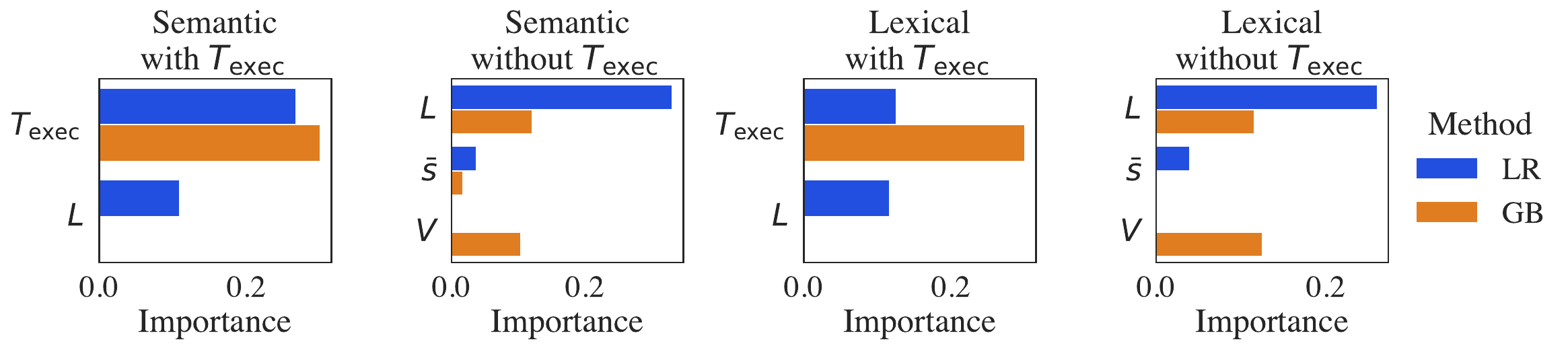}
    \caption{
        Permutation importance of top surviving factors in Experiment~1, per encoder and classifier.
        When included, $\Texec$ is the dominating factor regardless of encoder and classifier, when removed, path-length proxy $\Lpath$ inherits its role at comparable magnitude.
    }
\end{figure}

Failed attacks exhaust the $\Tmax=8$ budget by construction, while successful attacks typically terminate in $3.3$ turns on average (Appendix~\ref{app:attack_stats}). 
Any feature whose value scales with the number of steps therefore receives a free signal: this includes the path length $\Lpath$, velocity $\Vel$ and the average step pace $\meanSpeed$. 
The implication is structural rather than methodological: comparing attacks across a wide range of turn counts is intrinsically unfair to genuinely shape-based features, because length-correlated proxies dominate classification regardless of the actual trajectory geometry. 
The remaining experiments restrict to subsets of the data in which length variation is either controlled or eliminated outright. 
This length sensitivity is not specific to our setup. 
Any experimental design that compares multi-turn conversations of variable length for any purpose risks conflating trajectory geometry with turn count. 
We recommend that future work in this space report results both with and without length equalization, treating the gap between the two as a diagnostic for the severity of the confound.

\subsection{Experiment 2: Shape signals under fair length}\label{sec:exp2}
Restricting to $\TrajF{6}$ removes conversation length as a confound and leaves us with $2{,}926$ length-equalized conversations across the four generators, of which $21.9\%$ are successful escapes. 
We fit the logistic regression (LR) and gradient boosting (GB) classifiers, separately for the semantic ($\sem$) and lexical ($\lex$) encoders, repeating each run over $10$ random seeds.

Table~\ref{tab:exp2_perf} reports the seed-averaged classification performance for the four resulting configurations.
Once length information is removed, AUROC ranges from $0.653$ to $0.700$, and F1 from $0.400$ to $0.437$. 
The lexical encoder retains a small lead ($+0.03$ average AUROC and $+0.03$ average F1 over the semantic encoder), but the gap remains comparable to the seed-to-seed standard deviation observed within any single configuration ($\sigma_{\text{AUROC}}\!\in\![0.017,0.026]$, $\sigma_{\text{F1}}\!\in\![0.022,0.029]$). 
Recall stays high ($0.66$--$0.69$) while precision remains low ($0.29$--$0.32$), reflecting the class imbalance rather than any encoder-specific behavior.
We therefore conclude that, with length controlled, the choice of encoder has no statistically meaningful effect on classification performance.
 
\begin{table}[h]\label{tab:exp2_perf}
    \centering
    \caption{
        Classification performance in Experiment~2, reported as mean $\pm$ standard deviation over $10$ random seeds. 
        All conversations are length-equalized ($\TrajF{6}$, trimmed to the first six turns).
    }
    \small
    \begin{tabular}{llcccc}
        \toprule
        \textbf{Encoder} & \textbf{Classifier} & \textbf{Precision} & \textbf{Recall} & \textbf{F1} & \textbf{AUROC} \\
        \midrule
        Semantic (\sem) & LR & $0.295{\scriptstyle\pm0.020}$ & $0.675{\scriptstyle\pm0.023}$ & $0.411{\scriptstyle\pm0.022}$ & $0.672{\scriptstyle\pm0.017}$ \\
        & GB & $0.287{\scriptstyle\pm0.025}$ & $0.659{\scriptstyle\pm0.039}$ & $0.400{\scriptstyle\pm0.027}$ & $0.653{\scriptstyle\pm0.018}$ \\ 
        \midrule
        Lexical  (\lex) & LR & $0.323{\scriptstyle\pm0.025}$ & $0.682{\scriptstyle\pm0.048}$ & $0.437{\scriptstyle\pm0.029}$ & $0.700{\scriptstyle\pm0.026}$ \\ 
        & GB & $0.315{\scriptstyle\pm0.021}$ & $0.691{\scriptstyle\pm0.044}$ & $0.432{\scriptstyle\pm0.026}$ & $0.680{\scriptstyle\pm0.018}$ \\ 
        \bottomrule
    \end{tabular}
\end{table}

Table~\ref{tab:exp2_consistency} summarizes which factor families survive across configurations. 
Cell marks reflect how reliably a family appears across seeds; a family is counted as present if it survives in any of the three trajectory types ($\Traj$, $\TrajU$, $\TrajA$). 
No family appears consistently across all four configurations; the surviving signal is clearly encoder-dependent.
Under the lexical encoder, stretch-decreasing dynamics ($\StretchDec$) and negative-side outlier timing ($\OTneg$) are the dominant families, reaching \cmark\ in both LR and GB. 
Under the semantic encoder, the $L_2$-norm path descriptors ($\Lpath$) take the leading role, appearing in $8$ and $10$ of $10$ seeds for LR and GB respectively. 
$\StretchDec$ is the most broadly present family overall, reaching \cmark\ under the lexical encoder and (\cmark) under Semantic LR, but falling off under Semantic GB ($4$ seeds). 
The remaining families --- $\HighFluc$, $\StretchHigh$, $\OTpos$, and $\TRA$ --- surface intermittently, each reaching at least (\cmark) in one or more configurations. 
Circularity ($\Circ$) does not survive reliably in any configuration.

\begin{table}[h]\label{tab:exp2_consistency}
    \centering
    \caption{
        Consistency of surviving factor families across encoders and classifiers in Experiment~2 (length-equalized, aggregated over $10$ seeds). 
        Cell-level marks: \cmark\ if family present in $\geq 7$ of $10$ seeds; (\cmark) in $5$--$6$ seeds; \xmark\ otherwise. 
    }
    \small
    \begin{tabular}{lcccc} 
        \toprule
        & \multicolumn{2}{c}{\textbf{Lexical} (\lex)} & \multicolumn{2}{c}{\textbf{Semantic} (\sem)} \\
        \cmidrule(lr){2-3}\cmidrule(lr){4-5}
        \textbf{Feature family} & \textbf{LR} & \textbf{GB} & \textbf{LR} & \textbf{GB} \\
        \midrule
        $\StretchDec$       & \cmark   & \cmark   & (\cmark) & \xmark   \\
        $\HighFluc$         & \xmark   & \xmark   & (\cmark) & \xmark   \\
        \Lpath              & \xmark   & \xmark   & \cmark   & \cmark   \\
        $\OTneg$            & \cmark   & (\cmark) & \xmark   & (\cmark) \\
        $\StretchHigh$      & (\cmark) & (\cmark) & (\cmark) & \xmark   \\
        $\OTpos$            & (\cmark) & \xmark   & \xmark   & \cmark   \\
        $\TRA$              & \xmark   & \cmark   & (\cmark) & \xmark   \\
        $\Circ$             & \xmark   & \xmark   & \xmark   & \xmark   \\
        \bottomrule
    \end{tabular}
\end{table}

Comparing against the experiment in Section~\ref{sec:exp1}, where the dominant signal was carried by the turn-count proxy itself, the picture under length-equalization is qualitatively different.
The set of important features both \emph{changes} and \emph{expands}: length-correlated proxies give way to shape statistics, with stretch-decreasing dynamics ($\StretchDec$) and outlier timing ($\OTneg$) leading under the lexical encoder, and path descriptors ($\Lpath$) persisting under the semantic encoder as a genuine shape signal once length is controlled.
Encoder choice has a modest effect on performance (Lexical--Semantic AUROC gap of $\sim 0.03$, comparable to seed-to-seed noise) but a larger effect on which factor families emerge as survivors.


\subsection{Experiment 3: Early warning potential}\label{sec:exp3}
Experiment~\ref{sec:exp2} established that, once length is held fixed at six turns, a stable geometric signal of around $0.65$--$0.70$ AUROC remains. 
Whether this signal is operationally useful depends on \emph{when} it becomes available: a fingerprint that requires the entire conversation to materialize is too late to act on, while one that surfaces in the opening turns supports online intervention.
We test this directly by progressively trimming the tail of each conversation and re-fitting the same classifiers on the truncated prefix.

Concretely, we restart from the length-equalized pool of $2{,}926$ conversations used in Experiment~\ref{sec:exp2} ($\TrajF{6}$), and remove the last $k$ turns for $k\in\{0,\dots,4\}$, leaving prefixes of $6-k$ turns ($\TrajL$). 
The $k{=}0$ row therefore retains the full six-turn conversation and serves as a single-seed in-figure replay of Experiment~\ref{sec:exp2}'s Lexical+LR configuration. 
As an external reference, we additionally evaluate \texttt{llama-guard-4-12b} (Llama Guard) as a single-turn safety classifier applied to each truncated conversation; this is the dominant class of deployed runtime defense, which evaluates messages independently of trajectory shape.

Figure~\ref{fig:exp3} summarizes classification performance as a function of the retained prefix length, viewed both as ROC (which is insensitive to the class imbalance) and as precision-recall (which is not). 
Two patterns stand out. 
First, the geometric classifier degrades \emph{gracefully}: from the no-trim anchor ($k{=}0$, six retained turns) to last-turn trimmed ($k{=}1$); e.g., AUROC drops from $0.70$ to $0.67$ and PR-AUC from $0.38$ to $0.35$. 
Second, while even with only the first two turns visible ($k{=}4$) the geometric classifier remains well above chance with AUROC ${=}0.61$, \texttt{llama-guard-4-12b}'s performance drops to chance at $k = 2$ with AUROC ${=}0.49$. 

The factor structure underlying these numbers is consistent with Experiment~\ref{sec:exp2}. 
Across all horizons, the surviving features are drawn from the same families that dominated the lexical-encoder results in Experiment~\ref{sec:exp2}: stretch-decreasing dynamics ($\StretchDec$), negative-side outlier timing ($\OTneg$), and the L2-norm path descriptors ($\Lpath$, $\Disp$, $\meanSpeed$). 
At the longer prefixes ($k\in\{0,1,2\}$) the multi-variate temporal statistics, in particular $\StretchDec$ and $\OTneg$, lead the importance ranking, consistent with the lexical-encoder picture from Experiment~\ref{sec:exp2}; at the shortest prefixes ($k\in\{3,4\}$), where temporal statistics become noisier on very short series, the L2-norm path length $\Lpath$ rises to share the top of the ranking with $\OTneg$, but the AUROC stays well above chance. 
The shape of an adversarial trajectory is therefore not an artifact of the entire conversation: by the second or third turn, enough of the geometry has accumulated to discriminate the eventual outcome at AUROC well within the seed-to-seed band of the full length-equalized analysis.

These results have a direct deployment consequence. 
A monitor that consults the lexical encoder, computes a small set of length-invariant geometric features over the user--assistant exchange so far, and queries a logistic regression classifier can flag in-flight jailbreaks at AUROC $\approx 0.65$ from the first four turns of the conversation, while the dominant content-based defense slides to chance over the same prefix. 
The signal that distinguishes successful from failed multi-turn attacks is front-loaded, encoder-light, and available before the attack itself completes.

\begin{figure}[t]\label{fig:exp3}
    \centering
    \includegraphics[width=\linewidth]{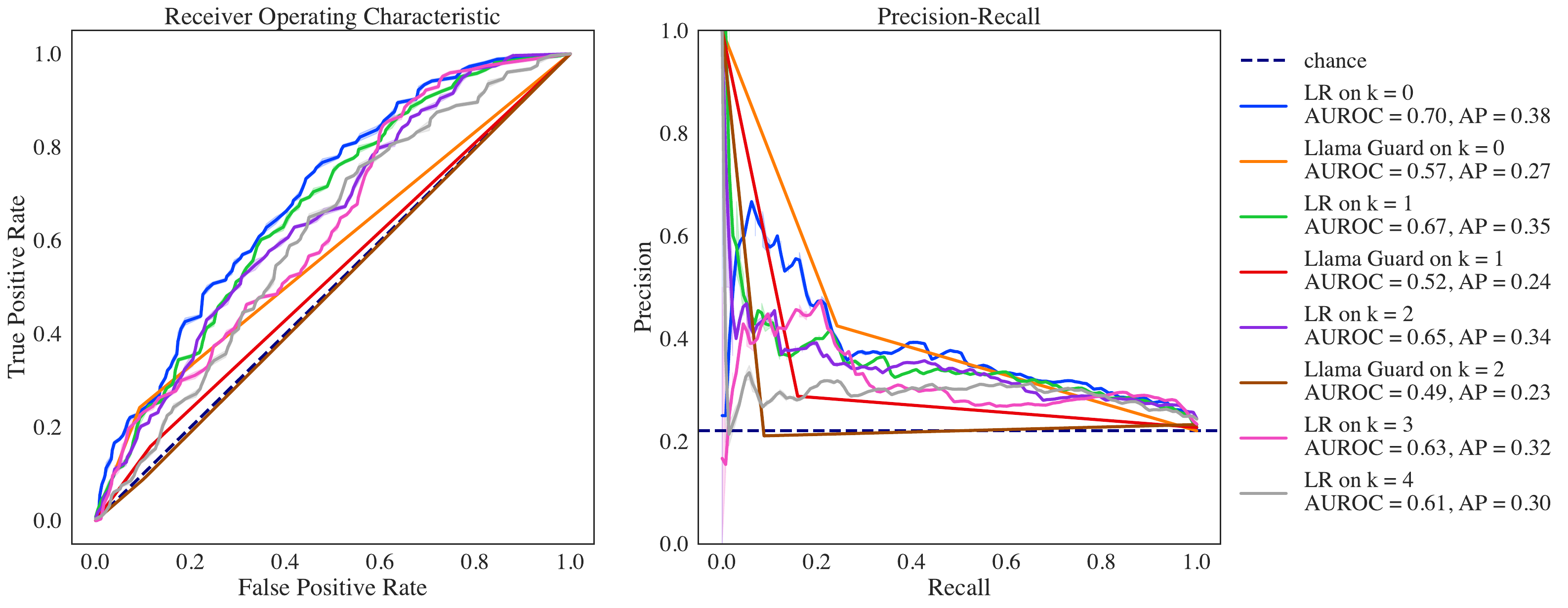}
    \caption{
        Receiver-operating-characteristic (left) and precision-recall (right) curves of the geometric LR classifier (lexical encoder, length-equalized prefixes) and the content-based \texttt{llama-guard-4-12b} (LG) baseline, across trim levels $k\in\{0,\dots,4\}$, where the retained prefix has length $6{-}k$. 
        Dashed navy lines mark the chance reference: the diagonal in the ROC panel and the positive-class base rate ($21.9\%$) in the PR panel. 
        The full per-metric breakdown is given in Appendix~\ref{app:exp3} (Table~\ref{tab:exp3_perf}).
    }
\end{figure}

\section{Theoretical analysis}\label{sec:theory}
The experiments in the previous section reveal three structural regularities: length dominates na\"ive classification (Experiment~\ref{sec:exp1}), encoder choice is irrelevant once length is held fixed (Experiment~\ref{sec:exp2}), and the residual shape signal that persists after length equalization is front-loaded in time (Experiment~\ref{sec:exp3}). 
In this section, we provide formal results that explain each regularity and yield testable quantitative predictions. 
Throughout, we model the binary label $Y \in \{0,1\}$ (failed/successful attack) and a feature vector $\mathbf{f} \in \mathbb{R}^p$ extracted from the conversation trajectory. 
All proofs are in Appendix~\ref{app:proofs}.

\paragraph{Confounding effect of conversation length}
We first formalize the observation that length-correlated features dominate classification and predict the performance drop when length is controlled.

\begin{definition}
    Let $\ell(\mathbf{f})$ denote the \emph{length component} of the feature vector, a subvector of features whose population means scale monotonically with the number of executed turns $\Texec$, and $\mathbf{g}(\mathbf{f})$ denote the \emph{shape component}, the subvector of features whose population means are invariant to $\Texec$ when computed on length-equalized trajectories. 
    Write $\mathbf{f} = (\ell, \mathbf{g})$.
\end{definition}

\begin{assumption}\label{as:logistic}
    The class-conditional distributions satisfy a Gaussian location model: for $y \in \{0,1\}$, $ \ell \mid Y{=}y \;\sim\; \mathcal{N}(\mu_\ell^{(y)},\, \sigma_\ell^2), \mathbf{g} \mid Y{=}y \;\sim\; \mathcal{N}(\boldsymbol{\mu}_g^{(y)},\, \sigma_g^2 I_q)$, with $\ell \perp \mathbf{g} \mid Y$ (conditional independence of length and shape components given the label).
\end{assumption}

The Gaussian location model is standard in signal detection theory \cite{green1966signal} and is the implicit generative model behind Fisher's linear discriminant, which logistic regression approximates when class-conditional distributions share a common covariance. 
Since both classifiers in our experiments produce decision boundaries that are monotone functions of a linear projection under this model, the assumption lets us derive closed-form expressions for discriminability that would otherwise require nonparametric estimation.

\begin{proposition}\label{prop:auroc}
    Under Assumption~\ref{as:logistic}, the Bayes-optimal AUROC on the full feature vector
    $\mathbf{f}$ is
    \begin{equation*}
      \mathrm{AUROC}_{\mathrm{full}}
      \;=\;
      \Phi\!\left(
        \sqrt{\frac{\mathrm{SNR}_\ell^2 + \mathrm{SNR}_g^2}{2}}
      \right),
    \end{equation*}
    where $\mathrm{SNR}_\ell := \sigma_\ell^{-1} |\mu_\ell^{(1)} - \mu_\ell^{(0)}|, \mathrm{SNR}_g := \sigma_g^{-1} \|\boldsymbol{\mu}_g^{(1)} - \boldsymbol{\mu}_g^{(0)}\|$, and $\Phi$ is the standard normal CDF. 
    On length-equalized data ($\mathrm{SNR}_\ell = 0$), the AUROC reduces to $\mathrm{AUROC}_{\mathrm{shape}} = \Phi\!\bigl(\mathrm{SNR}_g / \sqrt{2}\bigr)$.
\end{proposition}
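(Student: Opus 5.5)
The plan is to reduce the Bayes-optimal classifier to a one-dimensional projection and then compute the AUROC of that scalar score directly.

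\emph{Step 1: form of the optimal score.} Under Assumption~\ref{as:logistic}, the full feature vector $\mathbf{f}=(\ell,\mathbf{g})$ is, conditional on $Y=y$, Gaussian with mean $\boldsymbol{\mu}^{(y)}=(\mu_\ell^{(y)},\boldsymbol{\mu}_g^{(y)})$. The conditional independence $\ell\perp\mathbf{g}\mid Y$ gives a block-diagonal covariance $\Sigma=\mathrm{diag}(\sigma_\ell^2,\sigma_g^2 I_q)$ that is shared by both classes. The log-likelihood ratio is therefore affine in $\mathbf{f}$, namely $\Lambda(\mathbf{f})=\mathbf{w}^\top\mathbf{f}+c$ with $\mathbf{w}=\Sigma^{-1}\boldsymbol{\delta}$ and $\boldsymbol{\delta}=\boldsymbol{\mu}^{(1)}-\boldsymbol{\mu}^{(0)}$. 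By the Neyman--Pearson lemma, thresholding $\Lambda$ is optimal at every false-positive rate. Hence its ROC curve dominates that of every other score, and the Bayes-optimal AUROC is the AUROC of $S=\mathbf{w}^\top\mathbf{f}$. I take this as the meaning of ``Bayes-optimal AUROC'' and state it explicitly.

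\emph{Step 2: distribution of the score.} Conditional on $Y=y$, $S\sim\mathcal{N}(\mathbf{w}^\top\boldsymbol{\mu}^{(y)},\,\mathbf{w}^\top\Sigma\mathbf{w})$. The mean gap is $\mathbf{w}^\top\boldsymbol{\delta}=\boldsymbol{\delta}^\top\Sigma^{-1}\boldsymbol{\delta}=:\Delta^2$, the squared Mahalanobis distance, and the common variance is also $\mathbf{w}^\top\Sigma\mathbf{w}=\Delta^2$. Because $\Sigma$ is block diagonal,
\[
\Delta^2=\frac{(\mu_\ell^{(1)}-\mu_\ell^{(0)})^2}{\sigma_\ell^2}+\frac{\|\boldsymbol{\mu}_g^{(1)}-\boldsymbol{\mu}_g^{(0)}\|^2}{\sigma_g^2}=\mathrm{SNR}_\ell^2+\mathrm{SNR}_g^2.
\]
This additive split is exactly where conditional independence and the isotropic shape covariance are used.

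\emph{Step 3: AUROC.} The AUROC equals $\Pr(S_1>S_0)$ for independent draws $S_1\sim S\mid Y=1$ and $S_0\sim S\mid Y=0$. The difference $S_1-S_0$ is Gaussian with mean $\Delta^2$ and variance $2\Delta^2$. Therefore
\[
\mathrm{AUROC}=\Phi\!\left(\frac{\Delta^2}{\sqrt{2}\,\Delta}\right)=\Phi\!\left(\frac{\Delta}{\sqrt{2}}\right),
\]
which is the claimed formula. The degenerate case $\Delta=0$ gives $1/2$, consistent with $\Phi(0)$.

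\emph{Step 4: length-equalized case.} On length-equalized data, $\mu_\ell^{(1)}=\mu_\ell^{(0)}$. The $\ell$-coordinate of $\mathbf{w}$ then vanishes, so $\mathrm{SNR}_\ell=0$ and the formula reduces to $\Phi(\mathrm{SNR}_g/\sqrt{2})$.

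\emph{Main obstacle.} The computation itself is routine. The delicate point is Step~1: justifying that ``Bayes-optimal AUROC'' is attained by the linear likelihood-ratio score. I would handle it by citing the Neyman--Pearson pointwise ROC dominance and noting that AUROC is monotone in the ROC curve.
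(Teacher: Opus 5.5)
Your proposal is correct and follows essentially the same route as the paper. Both arguments project onto the Fisher (likelihood-ratio) direction $\Sigma^{-1}\boldsymbol{\delta}$, use the block-diagonal covariance to split the squared Mahalanobis distance into $\mathrm{SNR}_\ell^2+\mathrm{SNR}_g^2$, and evaluate $\Pr(S_1>S_0)$ for a Gaussian difference with mean $\Delta^2$ and variance $2\Delta^2$. Your Neyman--Pearson argument that the linear score attains the optimal AUROC is a useful addition, since the paper simply asserts that the Bayes rule is Fisher's discriminant.
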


From Experiment~\ref{sec:exp1}, $\mathrm{AUROC}_{\mathrm{full}} \approx 0.991$, so $\Phi^{-1}(0.991) \approx 2.37$ and $d'_{\mathrm{full}} = \sqrt{2}\cdot 2.37 \approx 3.35$.
From Table~\ref{tab:exp2_perf}, $\mathrm{AUROC}_{\mathrm{shape}} \approx 0.700$ (best length-equalized configuration), giving $d'_{\mathrm{shape}} = \sqrt{2}\,\Phi^{-1}(0.700) \approx 0.74$.
The decomposition predicts $d'_{\mathrm{full}} = \sqrt{(d'_\ell)^2 + (d'_{\mathrm{shape}})^2}$, which yields $d'_\ell \approx \sqrt{3.35^2 - 0.74^2} \approx 3.27$. 
The length signal thus accounts for $(3.27/3.35)^2 \approx 95.1\%$ of the squared discriminability, consistent with the experimental observation that removing length information causes a dramatic AUROC drop while the residual shape signal, though modest in absolute terms, remains reliably above chance.


\paragraph{Minimum prefix length for reliable profiling}
Experiment~\ref{sec:exp3} provides us with a number of geometric features that are informative for early detection of attacks. 
Given such a feature whose class-conditional distributions are separated, it is a natural question to ask: how many turns of observation are needed for reliable detection?

\begin{assumption}\label{as:feature}
    Let $h_k$ denote the value of a scalar geometric feature computed from the first $k$ turns of the trajectory. 
    For each class $y \in \{0,1\}$, $h_k \mid Y{=}y$ is sub-Gaussian with mean $\mu_h^{(y)}(k)$ and variance proxy $\sigma_h^2$ uniformly in $k$. 
    The population separation satisfies $\Delta(k) := |\mu_h^{(1)}(k) - \mu_h^{(0)}(k)| \geq \Delta_0 > 0$ for all $k \geq k_0$, i.e.\ the feature becomes separated after a warm-up period $k_0$.
\end{assumption}

Our geometric features are computed from trajectories in a bounded embedding space, so their range is mechanically bounded and sub-Gaussianity holds with a parameter that depends on this range. 
The separation condition formalizes the empirical observation from Experiment~\ref{sec:exp2} that certain features (e.g.\ $\StretchDec$, $\OTneg$) maintain a nonzero gap between class means even after length equalization, and the warm-up period $k_0$ allows for features that require a minimum number of turns before they become well-defined.

\begin{proposition}\label{prop:prefix}
    Under Assumption~\ref{as:feature}, the Bayes error of a classifier on $h_k$satisfies
    \begin{equation*}
      P_{\mathrm{err}}(k)
      \;\leq\;
      \exp\!\left(-\frac{\Delta(k)^2}{8\,\sigma_h^2}\right)
      \quad \text{for all } k \geq k_0.
    \end{equation*}
    Consequently, the minimum prefix length for error at most $\varepsilon$ is
    \begin{equation*}
    k^*(\varepsilon)
    \;=\;
    \min\!\Bigl\{k \geq k_0 : \Delta(k) \geq \sigma_h\sqrt{8\log(\varepsilon)}\Bigr\}.    
    \end{equation*}
    If $\Delta(k) \geq \Delta_0$ for all $k \geq k_0$, then
    $k^*(\varepsilon) = k_0$ whenever
    $\Delta_0 \geq \sigma_h\sqrt{8\log(\varepsilon)}$.
\end{proposition}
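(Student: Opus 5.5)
The plan is to bound the Bayes error by the error of one explicit, suboptimal classifier, the midpoint threshold rule. Then a one-sided sub-Gaussian tail bound controls each class-conditional error separately. Fix $k \ge k_0$ and assume without loss of generality that $\mu_h^{(1)}(k) > \mu_h^{(0)}(k)$; otherwise swap the labels or replace $h_k$ by $-h_k$. Set $c_k := \tfrac12\bigl(\mu_h^{(0)}(k)+\mu_h^{(1)}(k)\bigr)$ and use the rule $\hat Y = \mathbf{1}\{h_k > c_k\}$. Since the Bayes classifier minimizes misclassification probability over all measurable rules of $h_k$, we get $P_{\mathrm{err}}(k) \le \Pr(\hat Y \neq Y)$.

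Next we bound the two conditional errors. Under class $0$ the rule errs when $h_k - \mu_h^{(0)}(k) > \Delta(k)/2$. Under class $1$ it errs when $h_k - \mu_h^{(1)}(k) \le -\Delta(k)/2$. The sub-Gaussian assumption with variance proxy $\sigma_h^2$ (uniform in $k$) gives the one-sided Chernoff tail $\Pr(\pm(h_k-\mu)\ge t)\le \exp(-t^2/(2\sigma_h^2))$. With $t=\Delta(k)/2$, each conditional error is at most $\exp(-\Delta(k)^2/(8\sigma_h^2))$. Writing $\pi_y=\Pr(Y=y)$, we have $\Pr(\hat Y\neq Y)=\pi_0 e_0+\pi_1 e_1\le \max(e_0,e_1)$. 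This yields the stated bound for every $k\ge k_0$, and it holds for any class priors.

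For the minimum prefix length, we solve $\exp(-\Delta(k)^2/(8\sigma_h^2))\le\varepsilon$ for $\Delta(k)$. This gives $\Delta(k)\ge \sigma_h\sqrt{8\log(1/\varepsilon)}$. Any $k\ge k_0$ satisfying this has $P_{\mathrm{err}}(k)\le\varepsilon$, so the smallest such $k$ is a valid $k^*(\varepsilon)$. One caveat must be stated: for $\varepsilon\in(0,1)$ the quantity $\log\varepsilon$ is negative, so the displayed threshold must be read as $\sqrt{8\log(1/\varepsilon)}$. I would state this explicitly and treat it as a typo. Also, strictly, $k^*$ defined this way is the minimal prefix certified by the bound, not necessarily the true minimal prefix at which the Bayes error first drops below $\varepsilon$.

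The final claim follows immediately. If $\Delta(k)\ge\Delta_0$ for all $k\ge k_0$ and $\Delta_0\ge\sigma_h\sqrt{8\log(1/\varepsilon)}$, then $k=k_0$ already satisfies the condition, and it is the smallest admissible index, so $k^*(\varepsilon)=k_0$. The only step needing care is the tail bound. It requires sub-Gaussianity in the centered sense around $\mu_h^{(y)}(k)$ with the same proxy for both classes and all $k$, which is exactly Assumption~\ref{as:feature}. The remaining obstacle is mostly bookkeeping: the sign of the mean gap and the $\log(1/\varepsilon)$ convention.
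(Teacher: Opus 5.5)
Your proposal is correct and follows the paper's proof essentially step for step: bound the Bayes error by the error of the midpoint-threshold rule, apply the one-sided sub-Gaussian tail with $t=\Delta(k)/2$ to each class, average over the two classes, and invert. Your reading of the threshold as $\sigma_h\sqrt{8\log(1/\varepsilon)}$ correctly fixes the statement's $\sqrt{8\log(\varepsilon)}$, which the paper's proof passes over with ``follows by inverting the inequality.'' Your caveat is also accurate: $k^*(\varepsilon)$ is only the smallest prefix certified by the bound, not necessarily the true minimum.
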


The stability of AUROC across trim levels $k \in \{0,\ldots,4\}$ in Experiment~\ref{sec:exp3} (Table~\ref{tab:exp3_perf}) implies that the discriminative features attain their separation $\Delta_0$ very early --- effectively within one or two turns of observation. 
This is precisely the regime in which Proposition~\ref{prop:prefix} predicts $k^*(\varepsilon) = k_0$: the feature separation is large enough relative to within-class variance that even a short observed prefix suffices.
The proposition also clarifies why \emph{different} features can have very different warm-up periods: features with larger $\Delta_0/\sigma_h$ ratios (such as $\Lpath$ at short prefixes in Experiment~\ref{sec:exp3}) become reliable earlier, while features with smaller ratios require longer prefixes to overcome noise.

\paragraph{Representation invariance}
Experiment~\ref{sec:exp2} shows that a sparse lexical encoder and a dense semantic encoder yield near-identical classification. 
We give a sufficient condition for this to happen.

\begin{definition}
    Two encoders $\phi_1, \phi_2 : \mathcal{M} \to \mathbb{R}^d$ are
    \emph{rank-preserving} with respect to a feature functional
    $F : (\mathbb{R}^d)^n \to \mathbb{R}$ if, for all pairs of trajectories
    $\mathcal{T}, \mathcal{T}'$ of equal length,
    \begin{equation*}
        F(\phi_1(\mathcal{T})) \;\geq\; F(\phi_1(\mathcal{T}'))
        \quad\Longleftrightarrow\quad
        F(\phi_2(\mathcal{T})) \;\geq\; F(\phi_2(\mathcal{T}')).  
    \end{equation*}
\end{definition}

\begin{proposition}\label{prop:invariance}
    If two encoders are rank-preserving with respect to every feature functional in the set $\{F_1, \ldots, F_p\}$, then any classifier whose decisions depend only on univariate feature ranks has identical AUROC under both encoders. For linear classifiers, identical AUROC additionally requires that the within-class covariance structure of the feature vector is shared across encoders up to a global scalar.
\end{proposition}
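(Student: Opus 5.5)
The plan is to reduce both claims to the fact that AUROC is a functional of the ranking that a score induces on the evaluation set. Fix the evaluation set of conversations with labels $Y$, and write $f^{(e)}_j(\mathcal{T}) = F_j(\phi_e(\mathcal{T}))$ for $e\in\{1,2\}$. The empirical AUROC of a score $s$ is the Mann--Whitney statistic
\begin{equation*}
\frac{1}{n_0 n_1}\sum_{Y(\mathcal{T})=1,\;Y(\mathcal{T}')=0}\Bigl(\mathbf{1}\{s(\mathcal{T})>s(\mathcal{T}')\}+\tfrac12\mathbf{1}\{s(\mathcal{T})=s(\mathcal{T}')\}\Bigr).
\end{equation*}
It therefore depends on $s$ only through the pairwise order relation it induces. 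The population version is handled identically, replacing the averages by probabilities over independent positive/negative draws.

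\emph{Step 1 (rank-based classifiers).} I will formalize ``decisions depend only on univariate feature ranks'' as follows: the score is $s(\mathcal{T}) = \Psi\bigl(r_1(\mathcal{T}),\dots,r_p(\mathcal{T})\bigr)$, where $r_j(\mathcal{T})$ is the rank of $f_j(\mathcal{T})$ within the pooled sample. By the definition of rank-preservation applied to each $F_j$, for all pairs of trajectories we have $f^{(1)}_j(\mathcal{T})\ge f^{(1)}_j(\mathcal{T}')$ if and only if $f^{(2)}_j(\mathcal{T})\ge f^{(2)}_j(\mathcal{T}')$. Applying this in both directions also preserves strict inequalities and ties. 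Hence $r^{(1)}_j = r^{(2)}_j$ for every $j$ and every trajectory. The score vectors then coincide, and so does the Mann--Whitney statistic. If the classifier is trained, the same argument applies to training, since the fitted $\Psi$ is itself a function of the ranks of the training data. Tree ensembles whose split thresholds fall between order statistics are an example.

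\emph{Step 2 (linear classifiers).} Rank preservation alone does not suffice here, because $w^\top \mathbf{f}$ mixes coordinates on their metric scales. I will work in the Fisher/Gaussian setting of Assumption~\ref{as:logistic}, where the score is $w_e = \Sigma_e^{-1}(\mu^{(1)}_e-\mu^{(0)}_e)$. Exactly as in Proposition~\ref{prop:auroc}, the AUROC is
\begin{equation*}
\Phi\bigl(\sqrt{\Delta_e^\top\Sigma_e^{-1}\Delta_e/2}\bigr), \qquad \Delta_e := \mu^{(1)}_e-\mu^{(0)}_e.
\end{equation*}
The hypothesis $\Sigma_2 = c\,\Sigma_1$ is not enough by itself. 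We also need the mean gaps to scale compatibly, $\Delta_2 = \sqrt{c}\,\Delta_1$, which is the natural consequence of the encoders inducing an affine rescaling of the features by a common factor. Under that condition the Mahalanobis separation is identical for the two encoders, and therefore so is the AUROC.

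The main obstacle is exactly this second part. As literally stated, ``shared covariance up to a global scalar'' does not by itself pin down the AUROC, so I will make the implicit assumption explicit. Either the features under $\phi_2$ are a common positive rescaling (plus shift) of those under $\phi_1$, which is consistent with rank preservation and forces both $\Sigma$ and $\Delta$ to scale. Or one states the condition as equality of the standardized separation $\Delta^\top\Sigma^{-1}\Delta$. I would first try to derive the mean scaling from rank preservation combined with the Gaussian model. Since monotone maps between Gaussians are affine, the per-feature maps are affine. A shared global scalar on the covariance then forces a common slope, which gives the required scaling of $\Delta$.
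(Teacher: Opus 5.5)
Your proposal is correct. Step~1 is the paper's argument: AUROC depends only on the order of scores, and rank preservation makes every single-feature comparison agree across encoders. You add something useful by making explicit that training must itself be rank-based.

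One small repair is needed in Step~1. Trees whose thresholds sit strictly \emph{between} consecutive training order statistics are not exactly rank-invariant at evaluation time. A monotone map can move a test value lying between two training values across the midpoint. State the splits instead as comparisons against observed training values; these are pairwise comparisons and hence preserved. The paper's ``compares against a threshold'' glosses over the same point.

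Step~2 takes a genuinely different and more careful route. The paper argues that $\Sigma_2 = c\,\Sigma_1$ makes the optimal coefficient vectors proportional, so the scores are related by a positive affine map. You instead evaluate the Gaussian AUROC $\Phi\bigl(\sqrt{\Delta_e^\top\Sigma_e^{-1}\Delta_e/2}\bigr)$ directly and observe, correctly, that the covariance condition alone does not determine it.

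That missing piece is exactly what the paper's step silently needs. The vector $w_2 = c^{-1}\Sigma_1^{-1}\Delta_2$ is proportional to $w_1$ only if $\Delta_2 \propto \Delta_1$. Proportional weights give affinely related scores only if $\mathbf{f}^{(2)}$ is an affine image of $\mathbf{f}^{(1)}$.

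Your derivation supplies both facts:
\begin{enumerate}
\item Rank preservation gives a single strictly increasing $g_j$ with $f^{(2)}_j = g_j(f^{(1)}_j)$ for both classes.
\item An increasing map pushing a Gaussian to a Gaussian is the affine quantile map.
\item Consistency across the two classes forces the mean gap to scale by that same slope.
\item The diagonal of $\Sigma_2 = c\,\Sigma_1$ forces every slope to equal $\sqrt{c}$.
\end{enumerate}
Hence $\mathbf{f}^{(2)} = \sqrt{c}\,\mathbf{f}^{(1)} + b$, and the Fisher scores differ only by a constant.

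The paper's route is shorter but rests on an unjustified proportionality claim. Yours costs an explicit common-covariance Gaussian model for each encoder. The paper assumes this implicitly anyway when it speaks of ``optimal coefficients.'' In exchange, your route actually proves the claimed sufficiency, and it derives the paper's affine-score conclusion rather than asserting it.
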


The near-perfect AUROC agreement between $\lex$ and $\sem$ in Table~\ref{tab:exp2_perf} suggests that the two encoders are approximately rank-preserving for the dominant features (path length, velocity, and the surviving Catch22 features). 
This is consistent with the Platonic Representation Hypothesis~\citep{platonic}, which posits that diverse representation learners converge toward a shared statistical model of reality. If the underlying rank structure of trajectory geometries is a property of the data rather than the encoder, rank preservation follows naturally.
For gradient boosting, Proposition~\ref{prop:invariance} guarantees that this suffices for AUROC invariance. For logistic regression, the proposition requires the additional condition that the within-class covariance structure is shared across encoders; the near-identical performance of LR across encoders in Table~\ref{tab:exp2_perf} suggests this stronger condition holds approximately.
The $L_2$-norm features ($\Lpath$, $\Disp$, $\directness$, $\meanSpeed$, $\Vel$, $\Circ$) depend on pairwise distances in the embedding space, and both encoders preserve the relative ordering of which trajectory pairs are more or less geometrically ``extreme''. 
The result also predicts that encoder invariance should \emph{break} for features that depend on absolute scale rather than rank ordering. 
We observe this in Table~\ref{tab:exp2_consistency}, where the surviving Catch22 families differ substantially across encoders: $\StretchDec$ and $\OTneg$ dominate under the lexical encoder while $\Lpath$ leads under the semantic encoder.

\section{Discussion and Conclusion}\label{sec:conclusion}
The most striking finding is how little of the na\"ive classification signal survives once conversation length is controlled: the AUROC drops from $0.99$ to roughly $0.65$--$0.70$, and our theoretical decomposition attributes $95.1\%$ of the squared discriminability to length alone. 
This is a cautionary result for the broader conversation-analysis literature, where trajectory-level features are increasingly used as proxies for dialogue quality~\citep{gooding2025interactiondynamicsrewardsignal} or alignment~\citep{simhi2026oldhabitsdiehard}. 
Any study that compares conversations of unequal length risks mistaking a length artifact for a structural signal.

The residual geometric signal, while modest, has properties that make it practically interesting. 
Classification performance is similar across encoders, which means a deployment need not commit to an expensive embedding model. 
It is front-loaded, appearing within the first two to three turns, which is the window in which intervention can still prevent harm. 
And it is carried by interpretable features: monotonic withdrawal in latent dimensions (the target model gradually conceding), concentrated outlier timing (sharp shifts at specific turns), and persistent low-frequency fluctuation. 
These are not opaque classifier artifacts; they correspond to recognizable dynamics of how an attacker incrementally steers a conversation.

The baseline comparisons in Experiment~\ref{sec:exp3} underscore a structural limitation of content-level defenses. 
The performance of \texttt{llama-guard-4-12b} collapses to chance when the final turns, the ones most likely to contain overt harmful content, are withheld. 
This is by design, because content filters are intended to detect harm, not intent. 
A trajectory-level monitor like ours would operate on a complementary signal and is therefore most valuable precisely where content filters are weakest, that is, early in the conversation before harmful content has been produced.

\paragraph{Limitations.}
Four limitations should be noted. 
\emph{(i)}~All experiments use a single attack strategy (Crescendo); whether the geometric fingerprint generalizes to other multi-turn methods (e.g.\ TAP, repeated-question attacks) remains open. 
\emph{(ii)}~The $\Tmax=8$ turn budget shapes the length confound; a different cap would shift both the success-rate distribution and the relative importance of length-sensitive features. 
\emph{(iii)}~Attack labels are assigned by a single LLM-as-judge with a fixed threshold; sensitivity to the judge or threshold is unexplored. 
\emph{(iv)}~Encoder invariance is established along a coarse sparse-vs.-dense axis; whether two dense encoders of different capacity agree as closely is not tested.

\paragraph{Future work.}
The immediate next step is cross-attack generalization: does a classifier trained on Crescendo trajectories transfer to TAP or repeated-question attacks, or does each strategy leave a distinct geometric signature? 
This would test whether the fingerprint reflects attacker behavior or model-specific response patterns. 
Beyond classification, replacing hard labels with calibrated trajectory-level probabilities would enable integration with existing guardrail pipelines, where a geometric risk score could modulate the sensitivity of content-level filters as a conversation progresses.

\begin{ack}
This research is supported by the Indian Institute of Management Bangalore Young Faculty Research Grant. SM acknowledges the creators of the anime Psycho-Pass for being the source of inspiration for this work.
\end{ack}

\bibliographystyle{abbrvnat}
\bibliography{reference}

@InProceedings{platonic,
  title = 	 {Position: The Platonic Representation Hypothesis},
  author =       {Huh, Minyoung and Cheung, Brian and Wang, Tongzhou and Isola, Phillip},
  booktitle = 	 {Proceedings of the 41st International Conference on Machine Learning},
  pages = 	 {20617--20642},
  year = 	 {2024},
  volume = 	 {235},
  series = 	 {Proceedings of Machine Learning Research},
  month = 	 {21--27 Jul},
  publisher =    {PMLR},
}

@misc{prompt-guard-86m,
  title = {Model Card - Prompt Guard},
  author = {Meta},
  year = {2024},
  url = {https://huggingface.co/meta-llama/Prompt-Guard-86M},
  note = {Accessed: Apr 30, 2026}
}

@misc{vijil-mbert,
  title = {Model Card for Vijil Prompt Injection},
  author = {Vijil},
  year = {2025},
  url = {https://huggingface.co/vijil/mbert-prompt-injection},
  note = {Accessed: Apr 30, 2026}
}

@misc{hackett2025bypassingllmguardrailsempirical,
      title={Bypassing LLM Guardrails: An Empirical Analysis of Evasion Attacks against Prompt Injection and Jailbreak Detection Systems}, 
      author={William Hackett and Lewis Birch and Stefan Trawicki and Neeraj Suri and Peter Garraghan},
      year={2025},
      eprint={2504.11168},
      archivePrefix={arXiv},
      primaryClass={cs.CR},
      url={https://arxiv.org/abs/2504.11168}, 
}

@misc{nova,
  title = {NOVA: The Prompt Pattern Matching},
  author = {Thomas Roccia},
  year = {2025},
  url = {https://github.com/Nova-Hunting/nova-framework},
  note = {Accessed: Apr 30, 2026}
}

@misc{fath,
      title={FATH: Authentication-based Test-time Defense against Indirect Prompt Injection Attacks}, 
      author={Jiongxiao Wang and Fangzhou Wu and Wendi Li and Jinsheng Pan and Edward Suh and Z. Morley Mao and Muhao Chen and Chaowei Xiao},
      year={2024},
      eprint={2410.21492},
      archivePrefix={arXiv},
      primaryClass={cs.CR},
      url={https://arxiv.org/abs/2410.21492}, 
}

@misc{spotlighting,
      title={Defending Against Indirect Prompt Injection Attacks With Spotlighting}, 
      author={Keegan Hines and Gary Lopez and Matthew Hall and Federico Zarfati and Yonatan Zunger and Emre Kiciman},
      year={2024},
      eprint={2403.14720},
      archivePrefix={arXiv},
      primaryClass={cs.CR},
      url={https://arxiv.org/abs/2403.14720}, 
}

@misc{ayub2024embeddingbasedclassifiersdetectprompt,
      title={Embedding-based classifiers can detect prompt injection attacks}, 
      author={Md. Ahsan Ayub and Subhabrata Majumdar},
      year={2024},
      eprint={2410.22284},
      archivePrefix={arXiv},
      primaryClass={cs.CR},
      url={https://arxiv.org/abs/2410.22284}, 
}

@misc{granite-guardian-41,
  title = {Introducing the IBM Granite 4.1 family of models},
  author = {IBM},
  year = {2026},
  url = {https://research.ibm.com/blog/granite-4-1-ai-foundation-models},
  note = {Accessed: Apr 30, 2026}
}

@misc{llama-guard-4-12b,
  title = {Llama Guard 4: Natively Multimodal Safeguard Model},
  author = {Meta},
  year = {2025},
  url = {https://huggingface.co/meta-llama/Llama-Guard-4-12B},
  note = {Accessed: Apr 30, 2026}
}

@misc{derczynski2024garakframeworksecurityprobing,
      title={garak: A Framework for Security Probing Large Language Models}, 
      author={Leon Derczynski and Erick Galinkin and Jeffrey Martin and Subho Majumdar and Nanna Inie},
      year={2024},
      eprint={2406.11036},
      archivePrefix={arXiv},
      primaryClass={cs.CL},
      url={https://arxiv.org/abs/2406.11036}, 
}

@misc{liu2024autodangeneratingstealthyjailbreak,
      title={AutoDAN: Generating Stealthy Jailbreak Prompts on Aligned Large Language Models}, 
      author={Xiaogeng Liu and Nan Xu and Muhao Chen and Chaowei Xiao},
      year={2024},
      eprint={2310.04451},
      archivePrefix={arXiv},
      primaryClass={cs.CL},
      url={https://arxiv.org/abs/2310.04451}, 
}

@misc{leviathan2025promptrepetitionimprovesnonreasoning,
      title={Prompt Repetition Improves Non-Reasoning LLMs}, 
      author={Yaniv Leviathan and Matan Kalman and Yossi Matias},
      year={2025},
      eprint={2512.14982},
      archivePrefix={arXiv},
      primaryClass={cs.LG},
      url={https://arxiv.org/abs/2512.14982}, 
}

@article{mehrotra2024tap,
  title={Tree of Attacks: Jailbreaking Black-Box LLMs with Crafted Prompts},
  author={Mehrotra, Anay and Zampetakis, Manolis and Kassianik, Paul and Nelson, Blaine and Anderson, Hyrum and Singer, Yaron and Karbasi, Amin},
  journal={arXiv preprint arXiv:2312.02119},
  year={2024}
}

@book{green1966signal,
  author    = {Green, David M. and Swets, John A.},
  title     = {Signal Detection Theory and Psychophysics},
  publisher = {Wiley},
  year      = {1966},
  address   = {New York}
}

@misc{laban2025llmslostmultiturnconversation,
      title={LLMs Get Lost In Multi-Turn Conversation}, 
      author={Philippe Laban and Hiroaki Hayashi and Yingbo Zhou and Jennifer Neville},
      year={2025},
      eprint={2505.06120},
      archivePrefix={arXiv},
      primaryClass={cs.CL},
      url={https://arxiv.org/abs/2505.06120}, 
}

@misc{munoz2024pyritframeworksecurityrisk,
      title={PyRIT: A Framework for Security Risk Identification and Red Teaming in Generative AI Systems},
      author={Gary D. Lopez Munoz and Amanda J. Minnich and Roman Lutz and Richard Lundeen and Raja Sekhar Rao Dheekonda and Nina Chikanov and Bolor-Erdene Jagdagdorj and Martin Pouliot and Shiven Chawla and Whitney Maxwell and Blake Bullwinkel and Katherine Pratt and Joris de Gruyter and Charlotte Siska and Pete Bryan and Tori Westerhoff and Chang Kawaguchi and Christian Seifert and Ram Shankar Siva Kumar and Yonatan Zunger},
      year={2024},
      eprint={2410.02828},
      archivePrefix={arXiv},
      primaryClass={cs.CR},
      url={https://arxiv.org/abs/2410.02828},
}

@misc{russinovich2025greatwritearticlethat,
      title={Great, Now Write an Article About That: The Crescendo Multi-Turn LLM Jailbreak Attack}, 
      author={Mark Russinovich and Ahmed Salem and Ronen Eldan},
      year={2025},
      eprint={2404.01833},
      archivePrefix={arXiv},
      primaryClass={cs.CR},
      url={https://arxiv.org/abs/2404.01833}, 
}

@misc{lubba2019catch22canonicaltimeseriescharacteristics,
      title={catch22: CAnonical Time-series CHaracteristics}, 
      author={Carl H Lubba and Sarab S Sethi and Philip Knaute and Simon R Schultz and Ben D Fulcher and Nick S Jones},
      year={2019},
      eprint={1901.10200},
      archivePrefix={arXiv},
      primaryClass={cs.IR},
      url={https://arxiv.org/abs/1901.10200}, 
}

@misc{mazeika2024harmbenchstandardizedevaluationframework,
      title={HarmBench: A Standardized Evaluation Framework for Automated Red Teaming and Robust Refusal}, 
      author={Mantas Mazeika and Long Phan and Xuwang Yin and Andy Zou and Zifan Wang and Norman Mu and Elham Sakhaee and Nathaniel Li and Steven Basart and Bo Li and David Forsyth and Dan Hendrycks},
      year={2024},
      eprint={2402.04249},
      archivePrefix={arXiv},
      primaryClass={cs.LG},
      url={https://arxiv.org/abs/2402.04249}, 
}

@misc{zou2023universaltransferableadversarialattacks,
      title={Universal and Transferable Adversarial Attacks on Aligned Language Models}, 
      author={Andy Zou and Zifan Wang and Nicholas Carlini and Milad Nasr and J. Zico Kolter and Matt Fredrikson},
      year={2023},
      eprint={2307.15043},
      archivePrefix={arXiv},
      primaryClass={cs.CL},
      url={https://arxiv.org/abs/2307.15043}, 
}

@misc{gooding2025interactiondynamicsrewardsignal,
      title={Interaction Dynamics as a Reward Signal for LLMs}, 
      author={Sian Gooding and Edward Grefenstette},
      year={2025},
      eprint={2511.08394},
      archivePrefix={arXiv},
      primaryClass={cs.CL},
      url={https://arxiv.org/abs/2511.08394}, 
}

@misc{simhi2026oldhabitsdiehard,
      title={Old Habits Die Hard: How Conversational History Geometrically Traps LLMs}, 
      author={Adi Simhi and Fazl Barez and Martin Tutek and Yonatan Belinkov and Shay B. Cohen},
      year={2026},
      eprint={2603.03308},
      archivePrefix={arXiv},
      primaryClass={cs.CL},
      url={https://arxiv.org/abs/2603.03308}, 
}

\appendix
\section{Crescendo attack mechanism}\label{app:crescendo}
Crescendo~\citep{russinovich2025greatwritearticlethat} is a multi-turn jailbreak strategy that escalates toward a forbidden objective across several turns rather than encoding it in a single prompt. 
Its premise is that single-turn safety filters score each message in isolation, so a request that would be refused outright can often be elicited indirectly: the attacker first establishes a conversational context that makes the harmful request look like a natural continuation, and only then closes the loop. 
We summarize the mechanism here; the full algorithm and prompt templates are described in the original paper.

\paragraph{Escalation.}
At each turn, the adversarial bot produces a prompt that takes the objective bot's previous response as a foothold and pushes one increment toward the seed objective---moving, for example, from background context, to specific details, to the explicit forbidden request. 
Each step stays within the conversational frame already established; this accumulating frame is what gives the attack its name. 
The seed objective itself is never revealed verbatim until the surrounding context has been primed enough that the objective bot is likely to comply.

\paragraph{Scoring and termination.}
After each assistant turn, the scoring bot (LLM-as-judge) rates how completely the response has satisfied the seed objective on a $[0,1]$ scale. 
The attack is declared a \emph{success} the first time this score crosses the threshold ($0.8$ in our setup); otherwise it is a \emph{failure} once the turn budget $\Tmax=8$ is exhausted. 
Each turn comprises one user message followed by one assistant message.

\paragraph{Backtracking.}
When the objective bot refuses or produces a non-compliant response, the adversarial bot is allowed to \emph{backtrack}: it discards the most recent user--assistant exchange and retries the same conversational position with a rephrased prompt. 
This lets the attacker route around individual refusals without resetting the entire dialogue. 
The number of backtracks per attack is capped at $2$. 
The trajectory $\Traj$ that Sections~\ref{sec:methodology}--\ref{sec:experiments} encode and analyze is the conversation that remains after the orchestrator commits to a single forward path---i.e., backtracked branches are pruned from the recorded turns.

\paragraph{Implementation.} 
We use the Crescendo attack executor from PyRIT~\citep{munoz2024pyritframeworksecurityrisk}, which implements the loop above with the prompt templates released alongside the original paper. 
Bot identities and decoding parameters are listed in Section~\ref{sec:methodology}.

\section{Attack generation statistics}\label{app:attack_stats}
\begin{figure}[h]\label{fig:attack_stats}
    \centering
    \includegraphics[width=\linewidth]{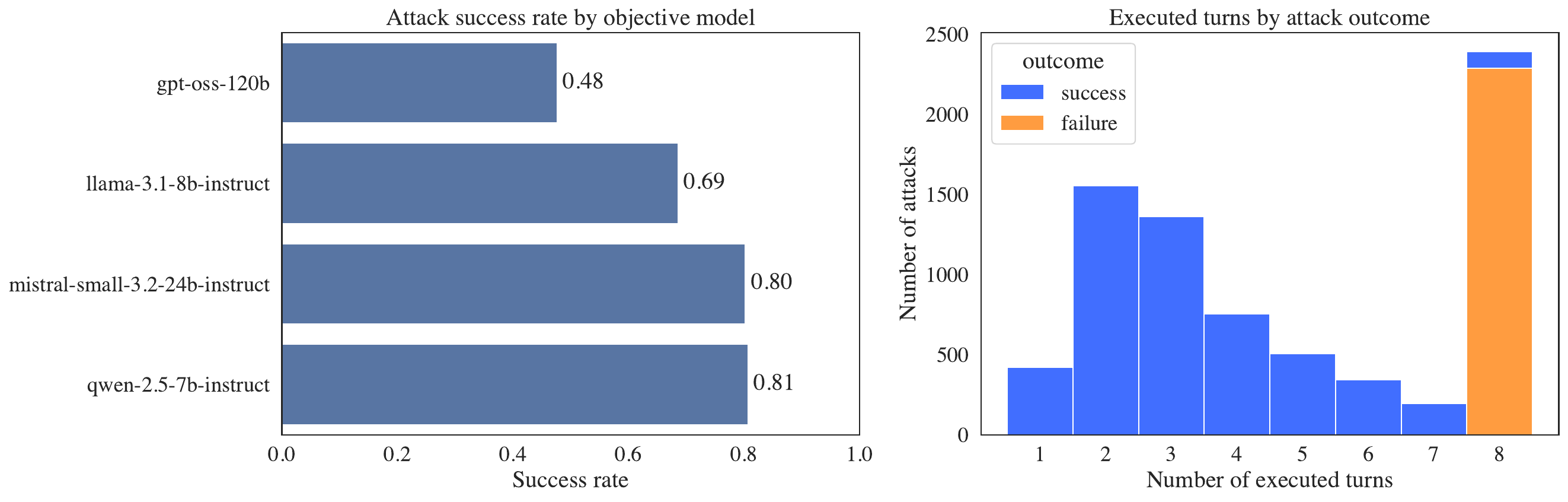}
    \caption{
        Aggregate statistics of the Crescendo{} attack pool used throughout the paper. 
        \emph{Left:} attack success rate per objective LLM (sorted ascending). 
        \emph{Right:} empirical distribution of the number of executed turns per attack, stacked by outcome; failures saturate the $\Tmax{=}8$ budget by construction while successes are concentrated in the first three to four turns.
    }
\end{figure}

Figure~\ref{fig:attack_stats} summarizes the conversational dataset that all three experiments share: $7{,}525$ Crescendo attacks executed against four target LLMs, with each attack capped at $\Tmax=8$ turns and $2$ backtracks. 
Attack difficulty varies substantially by target: \texttt{gpt-oss-120b} resists the largest fraction ($48\%$ success rate), \texttt{mistral-small-3.2-24b-instruct} and \texttt{qwen-2.5-7b-instruct} are the most permissive ($80\%$ and $81\%$), and \texttt{llama-3.1-8b-instruct}---which doubles as the adversarial bot, so this is the within-family setting where adversary and target share a base checkpoint---sits in between at $69\%$. 
The aggregate success rate over the full pool, $69.6\%$, is the value quoted in Section~\ref{sec:experiments}.

The right panel underwrites the length-confound diagnosis of Experiment~\ref{sec:exp1}. 
By construction of the attack strategy, every \emph{failed} attack exhausts the turn budget exactly, so the failure histogram is a delta at $\Tmax=8$; \emph{successful} attacks terminate strictly faster, with a mean of $3.3$ turns and a long thin tail. 
Any feature whose value scales with conversation length therefore inherits a near-perfect signal from this near-deterministic coupling between outcome and turn count, which is precisely the artifact Experiment~\ref{sec:exp1} sets out to isolate.

\section{Classification metrics}\label{app:metrics}
We report five standard binary-classification metrics throughout the paper. 
Let $y_i\in\{0,1\}$ be the true label of conversation $i$ (with $1$ denoting a successful attack), $s_i\in[0,1]$ a continuous score produced by the classifier, and $\hat{y}_i = \mathbf{1}[s_i \geq 0.5]$ the corresponding hard prediction at the default $0.5$ threshold. 
The four entries of the confusion matrix are
\begin{align*}
    \mathrm{TP} &= \#\{i : y_i = 1,\, \hat{y}_i = 1\}, & \mathrm{FP} &= \#\{i : y_i = 0,\, \hat{y}_i = 1\}, \\
    \mathrm{TN} &= \#\{i : y_i = 0,\, \hat{y}_i = 0\}, & \mathrm{FN} &= \#\{i : y_i = 1,\, \hat{y}_i = 0\}.
\end{align*}

\paragraph{Threshold-dependent metrics.} 
The first three metrics are evaluated at the default threshold ($0.5$) and characterize the classifier at a single operating point:
\begin{align*}
    \text{Precision} &= \frac{\mathrm{TP}}{\mathrm{TP}+\mathrm{FP}}, &
    \text{Recall} &= \frac{\mathrm{TP}}{\mathrm{TP}+\mathrm{FN}}, &
    \text{F1} &= \frac{2\,\text{Precision}\cdot\text{Recall}}{\text{Precision}+\text{Recall}}.
\end{align*}

\paragraph{Threshold-free metrics.} 
The remaining two summarize the entire score-rank ordering and are independent of any threshold choice. 
The Area Under the ROC curve is most cleanly stated as a rank statistic,
\begin{equation*}
    \text{AUROC} \;=\; \Pr\bigl(s_i > s_j \;\big|\; y_i = 1,\, y_j = 0\bigr),
\end{equation*}
i.e.\ the probability that a randomly drawn positive instance receives a higher score than a randomly drawn negative one; it is invariant to monotone score transformations, insensitive to class imbalance, and equal to $\tfrac{1}{2}$ for a random classifier. 
Average Precision (AP), the area under the precision-recall curve, is computed as a finite sum over the unique score thresholds:
\begin{equation*}
    \text{AP} \;=\; \sum_{n}\bigl(\mathrm{Recall}_n - \mathrm{Recall}_{n-1}\bigr)\,\mathrm{Precision}_n,
\end{equation*}
where the index $n$ runs over the sorted unique values of $s_i$. 
Unlike AUROC, AP is sensitive to the positive-class base rate $\pi$: a constant predictor achieves $\text{AP} = \pi$, so AP values should be read against this floor rather than against the $0.5$ AUROC baseline.

\section{Experiment 3: Classification performance}\label{app:exp3}
\begin{table}[h]\label{tab:exp3_perf}
    \centering
    \caption{
        Classification performance in Experiment~3 as a function of retained prefix length. 
        Conversations are first length-equalized to six turns ($\TrajF{6}$) and then truncated by removing the last $k$ turns, leaving $6{-}k$ turns; $k{=}0$ retains the full six-turn conversation. 
        \texttt{llama-guard-4-12b} (LG) is included as a content-based runtime safety baseline. 
        All entries use the lexical encoder $\lex$.
    }
    \small
    \begin{tabular}{cclcccc}
        \toprule
        $k$ & Prefix & \textbf{Method} & \textbf{Precision} & \textbf{Recall} & \textbf{F1} & \textbf{AUROC} \\
        \midrule
        $0$ & $6$ & LR & $0.318$ & $0.721$ & $0.442$ & $0.697$ \\
            &     & GB & $0.317$ & $0.705$ & $0.438$ & $0.682$ \\
            &     & LG & $0.425$ & $0.242$ & $0.308$ & $0.575$ \\
        \midrule
        $1$ & $5$ & LR & $0.316$ & $0.636$ & $0.422$ & $0.668$ \\
            &     & GB & $0.315$ & $0.758$ & $0.445$ & $0.659$ \\
            &     & LG & $0.288$ & $0.159$ & $0.205$ & $0.522$ \\
        \midrule
        $2$ & $4$ & LR & $0.315$ & $0.603$ & $0.414$ & $0.648$ \\
            &     & GB & $0.286$ & $0.603$ & $0.388$ & $0.622$ \\
            &     & LG & $0.211$ & $0.088$ & $0.124$ & $0.494$ \\
        \midrule
        $3$ & $3$ & LR & $0.266$ & $0.593$ & $0.367$ & $0.630$ \\
            &     & GB & $0.319$ & $0.674$ & $0.433$ & $0.651$ \\
            &     & LG & $0.280$ & $0.104$ & $0.151$ & $0.512$ \\
        \midrule
        $4$ & $2$ & LR & $0.307$ & $0.593$ & $0.405$ & $0.609$ \\
            &     & GB & $0.275$ & $0.550$ & $0.367$ & $0.586$ \\
            &     & LG & $0.167$ & $0.043$ & $0.068$ & $0.488$ \\
        \bottomrule
    \end{tabular}
\end{table}

Table~\ref{tab:exp3_perf} reports the per-metric classification performance behind the ROC and PR curves shown in Figure~\ref{fig:exp3}. 
For each trim level $k\in\{0,\dots,4\}$ (with retained prefix of $6-k$ turns; $k{=}0$ is a single-seed replay of Experiment~\ref{sec:exp2}'s lexical+LR setup) we list precision, recall, F1, and AUROC for the geometric logistic-regression and gradient-boosting classifiers (LR, GB) and for the \texttt{llama-guard-4-12b} content baseline (LG). 
Llama Guard's F1 column exposes the actual single-turn detection quality, which decays toward zero as the prefix shortens, while the geometric classifiers retain F1 around $0.4$ and AUROC well above chance at every horizon.

\section{Proofs of Theoretical Results}\label{app:proofs}
\begin{proof}[Proof of Proposition~\ref{prop:auroc}]
    Under the Gaussian location model with equal covariances, the Bayes-optimal decision rule is Fisher's linear discriminant, whose projection yields the scalar
    \begin{equation*}
      Z \;=\; \frac{\mu_\ell^{(1)}-\mu_\ell^{(0)}}{\sigma_\ell^2}\,\ell
      \;+\;
      \frac{(\boldsymbol{\mu}_g^{(1)}-\boldsymbol{\mu}_g^{(0)})^\top}{\sigma_g^2}\,\mathbf{g}.
    \end{equation*}
    By conditional independence, $\mathrm{Var}(Z \mid Y) = (\mu_\ell^{(1)}-\mu_\ell^{(0)})^2/\sigma_\ell^2 + \|\boldsymbol{\mu}_g^{(1)}-\boldsymbol{\mu}_g^{(0)}\|^2/\sigma_g^2 = \mathrm{SNR}_\ell^2 + \mathrm{SNR}_g^2$.
    Let $Z_1 \sim Z \mid Y{=}1$ and $Z_0 \sim Z \mid Y{=}0$ be independent draws from the two class-conditional distributions. 
    Their difference $Z_1 - Z_0$ is Gaussian with mean $d' \cdot \sigma_Z$ and variance $2\sigma_Z^2$, where $d' = \sqrt{\mathrm{SNR}_\ell^2 + \mathrm{SNR}_g^2}$ and $\sigma_Z^2 = \mathrm{Var}(Z \mid Y)$. The AUROC equals
    \begin{equation*}
        \mathrm{AUROC} = P(Z_1 > Z_0) = \Phi\!\left(\frac{d'}{\sqrt{2}}\right),
    \end{equation*}
    which is the standard result for equal-variance Gaussian discriminants~\citep[see, e.g.,][]{green1966signal}.
    Setting $\mathrm{SNR}_\ell = 0$ yields the length-equalized case: $\mathrm{AUROC}_{\mathrm{shape}} = \Phi\!\bigl(\mathrm{SNR}_g / \sqrt{2}\bigr)$.
\end{proof}

\begin{proof}[Proof of Proposition~\ref{prop:prefix}]
    For a classifier at the midpoint $t = (\mu_h^{(0)}(k) + \mu_h^{(1)}(k))/2$, an error under class $y$ requires $h_k$ to deviate from its class mean by at least $\Delta(k)/2$ in the direction of the opposing class. 
    This is a one-sided event: under class $y{=}0$, misclassification requires $h_k \geq t$, i.e., $h_k - \mu_h^{(0)}(k) \geq \Delta(k)/2$; under class $y{=}1$, it requires $h_k \leq t$, i.e., $\mu_h^{(1)}(k) - h_k \geq \Delta(k)/2$. 
    Under Assumption~\ref{as:feature}, $h_k \mid Y{=}y$ is sub-Gaussian with variance  proxy $\sigma_h^2$, so the one-sided sub-Gaussian tail bound gives
    \begin{equation*}
      P\!\bigl(h_k - \mu_h^{(y)}(k) \geq \Delta(k)/2 \mid Y{=}y\bigr)
      \;\leq\;
      \exp\!\left(-\frac{\Delta(k)^2}{8\,\sigma_h^2}\right).
    \end{equation*}
    The same bound applies to the other class by symmetry. Averaging over the two classes preserves the bound. 
    The minimum prefix length follows by inverting the inequality $\exp(-\Delta(k)^2/(8\sigma_h^2)) \leq \varepsilon$.
\end{proof}

\begin{proof}[Proof of Proposition~\ref{prop:invariance}]
    AUROC is a rank statistic: it equals the probability that a randomly drawn positive instance receives a higher score than a randomly drawn negative instance, and therefore depends only on the rank ordering of classifier scores across samples.

    For classifiers whose decisions depend only on univariate feature ranks, each decision step compares the value of a single feature against a threshold. 
    Since rank preservation guarantees that the relative ordering of each feature's values is identical under both encoders, every such comparison yields the same outcome, and the final score ranking is preserved. 
    AUROC invariance follows immediately.

    For linear classifiers, the score is $s = \beta_0 + \sum_j \beta_j F_j(\phi(\mathcal{T}))$.
    Rank preservation of individual features does not in general imply rank preservation of a linear combination, since independent monotone transformations can alter relative feature scales. 
    However, if the within-class covariance structure of the feature vector is shared across encoders up to a global scalar $c > 0$ (i.e., $\Sigma_{\phi_2} = c\,\Sigma_{\phi_1}$), then the optimal coefficient vectors under the two encoders are proportional, and the resulting scores are related by a positive affine transformation. 
    Since AUROC is invariant to monotone score transformations, it is identical under both encoders.
\end{proof}

\end{document}